\newcommand{\Var}{\mbox{Var}}
\newtheorem{thm}{Theorem}[section]
\theoremstyle{remark}
\newtheorem{remark}{Remark}
\def\namedlabel#1#2{\begingroup
    #2%
    \def\@currentlabel{#2}%
    \phantomsection\label{#1}\endgroup
}
\begin{document}

\begin{frontmatter}
\title{Thresholding Nonprobability Units in Combined Data for Efficient Domain Estimation\protect\thanksref{T1}}
\runtitle{Thresholding Nonprobability Units in Combined Data}
\runauthor{T. Savitsky et al.}
\thankstext{T1}{U.S. Bureau of Labor Statistics, 4600 Silver Hill Road , Suitland, MD 20746 USA}

\begin{aug}
\author[A]{\fnms{Terrance D.}~\snm{Savitsky}\ead[label=e2]{Savitsky.Terrance@bls.gov}\orcid{0000-0003-1843-3106}},
\and
\author[B]{\fnms{Matthew R.}~\snm{Williams}\ead[label=e3]{mrwilliams@rti.org}\orcid{0000-0001-8894-1240}}.
\and
\author[C]{\fnms{Vladislav}~\snm{Beresovsky}\ead[label=e1]{Beresovsky.Vladislav@bls.gov}\orcid{0009-0002-8375-5195}},
\and
\author[D]{\fnms{Julie}~\snm{Gershunskaya}\ead[label=e4]{Gershunskaya.Julie@bls.gov}\orcid{0000-0002-0096-186X}}
\address[A]{Office of Survey Methods Research,
U.S. Bureau of Labor Statistics\printead[presep={,\ }]{e1}}

\address[B]{Office of Survey Methods Research,
U.S. Bureau of Labor Statistics\printead[presep={,\ }]{e2}}

\address[C]{RTI International\printead[presep={,\ }]{e3}}

\address[D]{OEUS Statistical Methods Division,
U.S. Bureau of Labor Statistics\printead[presep={,\ }]{e4}}

\end{aug}

\begin{abstract}
Quasi-randomization approaches estimate latent participation probabilities for units from a nonprobability / convenience sample.  Estimation of participation probabilities for convenience units allows their combination with units from the randomized survey sample to form a survey weighted domain estimate.  One leverages convenience units for domain estimation under the expectation that estimation precision and bias will improve relative to solely using the survey sample; however, convenience sample units that are very different in their covariate support from the survey sample units may inflate estimation bias or variance.  This paper develops a method to threshold or exclude convenience units to minimize the variance of the resulting survey weighted domain estimator.  We compare our thresholding method with other thresholding constructions in a simulation study for two classes of datasets based on degree of overlap between survey and convenience samples on covariate support.  We reveal that excluding convenience units that each express a low probability of appearing in \emph{both} reference and convenience samples reduces estimation error.
\end{abstract}

\begin{keyword}
\kwd{Survey sampling}
\kwd{Nonprobability sampling}
\kwd{Data combining}
\kwd{Quasi randomization}
\kwd{Thresholding units}
\kwd{Bayesian hierarchical modeling}
\end{keyword}

\end{frontmatter}

\section{Introduction}

Declining response rates for randomized survey instruments administered by government statistical agencies \citep{10.1093/jssam/smx019} have encouraged the development of quasi-randomization processes such as those of \citet{2009elliot, 2017elliot, 2021valliant, savitskycombine2023} to allow inclusion of responses derived from a nonrandom convenience sample that includes responses for covariates that overlap those measured by the randomized survey or reference sample.  Directly combining responses for units participating in the convenience sample with those selected into the randomized or reference sample may be expected to induce bias for inference about an underlying latent population, however, because the convenience sample is not generally representative of that population \citep{2010Bethlehem, 2018Meng, 2011VanderWeele_Shpitser}.

Quasi-randomization methods propose model formulations to estimate the convenience sample unit marginal participation probabilities as if the convenience sample is realized from a \emph{latent} or unknown selection process.   Quasi-randomization uses the reference sample and associated known inclusion probabilities to provide information about the underlying sampling frame that is, in turn, used to estimate convenience sample inclusion probabilities.  The goal in using a statistical model to estimate the convenience sample inclusion probabilities is to allow inclusion of the convenience sample units in a combined (reference and convenience sample) data estimator for a domain mean (e.g., employment for computer services in New York city) with minimal bias.  

\citet{beresovsky2024review} provides a comprehensive overview of quasi-\newline randomization methods and compares the variance performances of a collection of methods for domain estimation that are mostly differentiated by assumptions about the degree of overlap in memberships in the convenience and reference samples, on the one hand, and the form of approximating inference on the non-sampled portion of the population, on the other hand.  \citet{2009elliot} and \cite{2017elliot} assume that the reference sample size is sufficiently small that there is a negligible overlap in unit inclusions with the convenience sample.  This negligible overlap assumption is increasingly untenable under ever larger convenience samples. Later methods dispense with this assumption; in particular, \citet{savitskycombine2023} and \citet{2021valliant} make no assumption about the degree of overlaps in units to allow more robust inference.  Similarly, recent methods differ on how to estimate likelihoods specified for the population on realized (convenience and reference) samples.  \cite{2022Wu, 2021valliant} use a pseudo likelihood approach by approximating unknown population units with the weighted reference sample units.  The use of reference sample-weighted units may inflate estimation variance for small-sized reference samples. \citet{savitskycombine2023} directly specify a likelihood for the realized samples that avoids using reference sample weights.

To motivate the focus of our paper, we highlight a key covariate balance requirement of these methods to produce combined reference and convenience sample domain estimators with reduced bias (as compared to domain estimators obtained from solely using the reference sample).  

Quasi-randomization methods require availability of the covariates used to determine the sampling design (governing the reference sample) for convenience sample units. This requirement is generally readily satisfied for sampling designs parameterized by demographic variables; for example, in the case of surveys conducted from business establishments by the U.S. Bureau of Labor Statistics these covariates might include a discretized employment size class, industry classification and metropolitan statistical area designation.

\cite{2020Valliant} further notes that the target population units are assumed to have positive probabilities to be included into both samples conditional on the shared set of covariates among both reference and convenience samples. They refer to this condition of positive participation probabilties for all units in both samples as a requirement for ``common support ''. Satisfying common support requires that the support of covariate values expressed by units in the population is also expressed by units included in \emph{both} the reference and convenience samples. This paper addresses estimation bias that arises when common support is satisfied but where a subset of population units selected into the reference sample with relatively moderate-to-large inclusion probabilities may express vanishingly low convenience sample participation probabilities. Heuristically, there are often subsets of the population purposefully emphasized in the reference sample that are poorly represented in the convenience sample. 

Since a convenience sample derives from an opt-in or self-initiated participation process there will typically be some units in the realized convenience sample that are very different from those represented in the randomized reference sample.  To be precise, there may be some units in the convenience samples whose covariate values don't well overlap those for the reference sample.  \citet{GelmanHill:2007} discuss degrees of ``partial overlap'' in the space of covariate values that may occur between treatment and control sample arms in the causal inference experimental set-up and the increase in bias and variance in the resulting propensity scores. The low overlap of covariate values for those convenience units with the reference sample provides less information to estimate associated participation probabilities for them, which produces estimates with large errors.  Including these low overlap convenience units along with reference units to formulate a domain estimator would be expected to inflate bias and variance rather than reduce it.  The error inflating effect of these low overlap convenience units on the domain estimator would partially offset the variance reduction benefit of incorporating high overlap convenience units along with the reference units discussed in \citet{savitskycombine2023}.

This paper introduces an approach to identify and exclude a subset of convenience sample units whose covariate values poorly overlap the reference sample in order to further reduce the error in domain estimators that incorporate convenience units (and their estimated participation probabilities).  Our approach for excluding or thresholding units uses estimated reference and convenience sample inclusion and participation probabilities for the \emph{convenience} units as a uni-dimensional summary of the overlap of multivariate covariate values.   In the sequel we develop a set of alternative statistics used for thresholding where each statistic represents distinct functional combinations of the estimated reference and convenience sample inclusion and participation probabilities for the convenience units.  We note that \citet{savitskycombine2023} specify a Bayesian modeling approach that provides estimates both convenience \emph{and} reference sample participation and inclusion probabilities for the convenience units.  The most simple example of using these estimated probabilities to threshold units would be to exclude convenience units with low reference sample inclusion probabilities below some threshold quantile.  The logic for such a thresholding statistic is that convenience units with low values for estimated reference sample inclusion probabilities may be expected to express a low degree of overlap in covariate values with the reference sample.

We introduce a thresholding statistic for excluding convenience sample units that arises by minimizing of the variance of a domain mean estimator that is a function of the estimated reference and convenience sample inclusion and participation probabilities for the convenience sample units in Section~\ref{sec:optimal}. We begin by deriving the variance optimal thresholding statistic under the simpler set-up that composes the domain mean estimator using solely estimated convenience sample inclusion probabilities for convenience units (and excludes estimated reference sample inclusion probabilities for the convenience units).  We then derive our main result under a set-up that constructs a threshold statistic composed of both estimated reference and convenience sample marginal probabilities for the convenience units.  Section~\ref{sec:additional} introduces an additional thresholding statistic motivated by \citet{beresovsky2024review}.  We compare the reductions in bias and means squared error offered by the alternative thresholding statistics with a Monte Carlo simulation study in Section~\ref{sec:simulation} and conclude with a discussion in Section~\ref{sec:discussion}.

\section{Optimal Variance Thresholding}\label{sec:optimal}
\subsection{Thresholding based solely on convenience sample probabilities}
We begin this section using only convenience sample participation probabilities (obtained from co-modeling with the reference sample) for convenience units to construct our estimator to introduce our notation under a simpler thresholding construction.  This set-up contrasts with use of \emph{both} estimated convenience and reference participation and inclusion probabilities for the convenience units to compose our domain mean estimator.  We label the set-up that utilizes solely convenience sample participation probabilities (for convenience sample units) to define our thresholding statistic and set as ``one-arm''.  By contrast, our main result will use the more general set-up that defines the thresholding statistic from both estimated convenience and reference sample probabilities, which we label as ``two-arm". 

Our main result defines a set subset of $x \in \mathbb{X}$ where units in the convenience sample whose threshold statistic percentile (as a function of $x$) is less than a some small value ($\alpha$) will be excluded from the subset. Only convenience sample units that are members of the subset will be used to render our weighted domain mean estimator, $\hat{\mu}$.  

Let $\delta_{c}\in\{0,1\}$ index unit participation in the convenience sample where $\delta_{c} = 1$ denotes participation in the sample and $\delta_{c}  = 0$ denotes a non-participating unit from the population frame, $U$, where $\lvert U \rvert = N$.   Define marginal participation probability $\pi_{c}(x) = \Pr[\delta_{c} = 1\mid X=x]$ where $X \in \mathbb{X}$ is a random variable. This construction for $\pi_{c}(x)$ defines a marginal participation probability (rather than a propensity score).  We proceed to extend and adapt a result of \citet{11514} from the literature on  causal inference that defines a threshold statistic and acceptance set for units constructed from a subset of $x \in \mathbb{X}$ where the value of the threshold statistic is exceeded.  The acceptance set formed by excluding units whose value lies below some percentile of the threshold statistic constructed by \citet{11514} is guaranteed to produce a minimizing variance for the domain mean estimator after excluding those $x$ not in the acceptance set.  We repurpose and extend their result from treatment and control arms under their causal inference set-up to reference and convenience sampling arms under our survey sampling set-up.  We begin our extension of their result with a simpler result that defines an acceptance set and formulation for a thresholding statistic for units in a convenience sample that produces a minimum variance for the domain mean estimator constructed solely from convenience sample participation probabilities.

Our population quantity of inferential interest is $\mu = \mathbb{E}(Y)$ where $Y$ denotes a univariate response variable of interest.   Define our domain mean estimator as,
\begin{equation}
\hat{\mu} = \mu + \frac{1}{N}\mathop{\sum}_{i=1}^{N} \frac{z_{i}\delta_{i}}{\hat{\pi}_{c}(x_{i})},
\end{equation}
where we are assuming $N$ is known and $z = y - \mu$.   Treating $N$ as known may be relaxed, in practice. Let
\begin{align}
\phi(Y,\delta,X,\mu,e) &= \frac{z\delta}{\pi_{c}(X)}.\\
\hat{\mu} & = \mu + \frac{1}{N}\mathop{\sum}_{i=1}^{N}\phi(y_{i},\delta_{i},x_{i},\mu,e_{i})
\end{align}
Then $\phi(Y,\delta,X,\mu,e)$ has $0$ expectation and variance \citep[p. 1182]{db8ba95fe8f345adae86cf364ecfaa79}, 
\begin{align} 
    \mathbb{E}\left[\phi(Y,\delta,X,\mu,e)^{2}\right] &= \frac{1}{N}\mathbb{E}\left[\frac{\sigma_{1}^{2}(X)}{\pi_{c}(X)}\right] \label{eq:onearmvar},
\end{align}
where $\sigma_{1}^{2} = \mathbb{V}\left(Y\mid \delta = 1,X = x\right)$. The expectation on the LHS of Equation~\ref{eq:onearmvar}is taken with respect to the joint distribution for $X$ and the taking of a sample from the underlying frame on which $\mathbb{X}$ is defined.  The expectation on the RHS is taken with respect to the distribution for $X$.

Equation~\ref{eq:onearmvar} may be used in combination with Corollary 1 of \citet{11514} to produce the following result for the optimal threshold level, $\alpha$.  
\begin{thm}[One-arm extension of \citet{11514}]\label{thm:onearm}
Assume $\pi_{c}(x) > 0~\forall x  \in \mathcal{X}$
Then set $\mathbb{A} = \left\{x\in\mathbb{X}: \pi_{c}(x) > \alpha\right\}$ denotes the variance optimal subset of $\mathbb{X}$ after thresholding units where $\mathbb{A}$ is defined based on thresholding conditional inclusion probability, $\pi_{c}(X)$.   The minimum variance quantile $\alpha$ is constructed by,
\begin{equation}\label{eq:onearmrule}
\frac{1}{\alpha}  = 2\mathbb{E}\left[\frac{1}{\pi_{c}(X)} \bigg| \frac{1}{\pi_{c}(X)}<\frac{1}{\alpha}\right].
\end{equation}
For computation of $\alpha$ we approximate the expectation with sums over units $i \in S_{c}$, where $S_{c}$ denotes the observed convenience sample,
\begin{equation}\label{eq:computealpha}
    \frac{1}{\alpha} = 2\frac{\mathop{\sum}_{i\in S_{c}}\mathbf{1}(\hat{\pi}_{c}(x_{i})>\alpha)\frac{1}{\hat{\pi}_{c}(x_{i})}}{\mathop{\sum}_{i\in S_{c}}\mathbf{1}(\hat{\pi}_{c}(x_{i})>\alpha)}.
\end{equation}
\end{thm}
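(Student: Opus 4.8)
The plan is to recast the thresholded estimator as the efficient estimator of the restricted estimand $\mu_{\mathbb{A}} := \mathbb{E}(Y \mid X \in \mathbb{A})$, to express its leading variance as a functional of the acceptance set $\mathbb{A}$ using the variance formula \eqref{eq:onearmvar}, and then to minimize that functional over all measurable $\mathbb{A} \subseteq \mathbb{X}$ by the variational argument of Corollary~1 in \citet{11514}, specialized from their two-arm (treatment/control) variance integrand to the one-arm integrand $\sigma_{1}^{2}(x)/\pi_{c}(x)$.

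First I would restrict the sum defining $\hat{\mu}$ to the convenience units with $x_{i} \in \mathbb{A}$ and renormalize by the number of such units, which is $N q(\mathbb{A}) + o_{P}(N)$ with $q(\mathbb{A}) := \Pr(X \in \mathbb{A})$. Combining \eqref{eq:onearmvar} with this renormalization shows that the leading variance of the thresholded estimator equals, up to a factor that does not depend on $\mathbb{A}$,
\begin{equation*}
V(\mathbb{A}) \;\propto\; \frac{1}{q(\mathbb{A})^{2}}\,\mathbb{E}\!\left[\frac{\sigma_{1}^{2}(X)}{\pi_{c}(X)}\,\mathbf{1}(X \in \mathbb{A})\right].
\end{equation*}
Write $g(x) := \sigma_{1}^{2}(x)/\pi_{c}(x)$; the assumed positivity $\pi_{c}(x) > 0$ keeps $g$ finite, and treating $\sigma_{1}^{2}(x) \equiv \sigma_{1}^{2}$ as constant --- which is what makes the final rule depend on $\pi_{c}$ alone, as in \eqref{eq:onearmrule} --- makes $x \mapsto g(x)$ a strictly decreasing function of $\pi_{c}(x)$, so that every lower level set $\{x : g(x) \le \gamma\}$ is exactly a set of the form $\{x : \pi_{c}(x) \ge \sigma_{1}^{2}/\gamma\}$.

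Next I would argue, following \citet{11514}, that the minimizer of $V(\mathbb{A})$ is such a level set and identify its boundary. For a fixed mass $q = q(\mathbb{A})$, minimizing $\mathbb{E}[g(X)\,\mathbf{1}(X \in \mathbb{A})]$ subject to $\Pr(X \in \mathbb{A}) = q$ is a Neyman--Pearson problem whose solution is the lower level set of $g$ with mass $q$; any competing set of the same mass must swap low-$g$ points for higher-$g$ ones and so cannot do better. Hence the optimizer is $\mathbb{A} = \{x : g(x) \le \gamma\} = \{x : \pi_{c}(x) > \alpha\}$ with $\alpha = \sigma_{1}^{2}/\gamma$, and it only remains to optimize over $\gamma$ (equivalently, over $\alpha$). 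Setting $G(\gamma) := \Pr(g(X) \le \gamma)$ and $M(\gamma) := \mathbb{E}[g(X)\,\mathbf{1}(g(X) \le \gamma)]$, so that $\mathrm{d}M(\gamma) = \gamma\,\mathrm{d}G(\gamma)$, differentiating $M(\gamma)/G(\gamma)^{2}$ gives the stationarity condition $\gamma\, G(\gamma) = 2 M(\gamma)$, i.e.\ $\gamma = 2\,\mathbb{E}[g(X) \mid g(X) \le \gamma]$; a routine check (or the convexity argument in \citet{11514}) confirms this stationary point is the global minimizer. Substituting $g = \sigma_{1}^{2}/\pi_{c}$ and $\gamma = \sigma_{1}^{2}/\alpha$ and cancelling $\sigma_{1}^{2}$ reduces this to $1/\alpha = 2\,\mathbb{E}[1/\pi_{c}(X) \mid 1/\pi_{c}(X) < 1/\alpha]$, which is \eqref{eq:onearmrule} (the strict versus weak inequality is immaterial when $\pi_{c}(X)$ has no atoms). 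Finally, replacing the population expectations by averages over the realized convenience sample $S_{c}$ and $\pi_{c}$ by $\hat{\pi}_{c}$ produces the plug-in fixed-point equation \eqref{eq:computealpha}, to be solved by iterating in $\alpha$.

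The hard part will be the combination of the two preceding steps: showing that, among \emph{all} measurable subsets of $\mathbb{X}$, the variance-minimizing acceptance set is a level set of $g$ and that its boundary solves the self-consistency equation, together with existence and uniqueness of the minimizing threshold. This is exactly the content of \citet{11514}, so the genuinely new work is bookkeeping --- checking that their Neyman--Pearson and stationarity arguments go through verbatim once their two-arm integrand $\sigma_{1}^{2}(x)/e(x) + \sigma_{0}^{2}(x)/(1-e(x))$ is replaced by the one-arm integrand $\sigma_{1}^{2}(x)/\pi_{c}(x)$, and noting that the homoscedasticity reduction is what turns a threshold on $g$ into the stated threshold directly on $\pi_{c}$. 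A secondary point to handle carefully is that the thresholded estimator is being treated as the efficient estimator of the restricted mean $\mu_{\mathbb{A}}$ rather than of $\mu$, so that $V(\mathbb{A})$ is the correct objective to minimize.
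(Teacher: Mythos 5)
Your proposal is correct and follows essentially the same route as the paper: it adapts the level-set-plus-stationarity argument of Theorem 1/Corollary 1 of \citet{11514} to the one-arm integrand $\sigma_{1}^{2}(x)/\pi_{c}(x)$, and your differentiation of $M(\gamma)/G(\gamma)^{2}$ is exactly the variational calculation the paper gives in its Appendix~\ref{a:proof} (there written as $F(\varepsilon)/G(\varepsilon)^{2}$ under the implicit homoscedasticity assumption). The only difference is presentational: the paper defers the ``optimal set is a level set'' step to the cited theorem, whereas you sketch it via a Neyman--Pearson swapping argument, which is the same content.
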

\begin{proof}
    Plugging in $\pi_{c}(X)$ for $e(X)$ into Theorem 1 of \citet{11514} and using the result of Equation~\ref{eq:onearmvar} for the case of where we utilize solely the convenience sample participation probabilities (for the convenience units) produces the result.
\end{proof}

\begin{remark}
The result of Theorem~\ref{thm:onearm} utilizes a one-arm set-up that composes the mean estimator from solely the convenience sample.  A companion, separate reference sample is required in order to estimate the convenience sample inclusion probabilities, $\hat{\pi}_{c}(x_{i}),~i \in (1,\ldots,N)$.  In the sequel, we will further extend Theorem~\ref{thm:onearm} by additionally estimating the reference sample inclusion probabilities for the same convenience units, $\hat{\pi}_{r}(x_{i}),~i \in (1,\ldots,N)$ also using the reference sample inclusion probabilities estimated on the convenience units.  See~\citet{savitskycombine2023} for more details on estimating $\left(\hat{\pi}_{c}(x_{i}),\pi_{r}(x_{i})\right)$ (where subscript ``$r$'' denotes reference sample) for convenience sample units.  They specify a model for the observed membership indicator in the pooled sample, $\mathbf{1}_{z_{i}}$, which is set to $1$ if unit $i$ is included in the convenience sample and $0$ if the unit belongs to the reference sample. Units in the convenience and reference samples are ``stacked'', which allows for a unit included in the convenience sample to also be included in the reference sample without the requirement to \emph{know} the identity of that unit.  They utilize a Bayesian hierarchical modeling approach that specifies a Bernoulli likelihood for indicator $\mathbf{1}_{z_{i}}$ for all units in the pooled sample.  A likelihood term is also included for $\pi_{r}(X_{i})$ only for units in the observed reference sample (where $\pi_{r}(X_{i})$ is known) to borrow further modeling strength.  This modeling set-up of \citet{savitskycombine2023} may also be performed in the frequentist paradigm. The main advantage of the Bayesian approach is that it treats values $\pi_{r}(X_{i})$ for the \emph{convenience} sample as \emph{unknown} and allows their estimation in the model.  By contrast, in the frequentist set-up (see \citet{beresovsky2024review}) $\pi_{r}(X_{i})$ are assumed known for \emph{all} convenience and reference sample units. 
\end{remark}

\begin{remark}
In this one-arm case where the domain estimator is constructed solely from the estimated convenience sample inclusion probabilities, the resulting thresholding is performed on the convenience sample inclusion probabilities, $\pi_{c}(x_{i}),~i\in S_{c} \subset U$ (where $S_{c}$ denotes units in frame $U$ that participate in the convenience sample), without accounting for the estimation quality of $\pi_{c}(X)$. So, this is a traditional regularization approach used to stabilize the variance of a survey domain estimator by excluding units with extreme weight values.  This approach trades some small increase in bias for a large decrease in variance.  
\end{remark}

\begin{remark}
We include an alternative, direct derivation for the result of Theorem~\ref{thm:onearm} in an Appendix~\ref{a:proof} assuming Equation~\ref{eq:onearmvar} is everywhere differentiable (on $x \in \mathbb{X}$).  We also include an illustration to show that the result of the Theorem does, indeed, produce a minimum variance estimator for $\hat{\mu}$.
\end{remark}

Equation~\ref{eq:onearmvar} can now be generalized in the manner of Section $3.1$ of \citet{11514} to develop an alternative to their Theorem 1 and Corollary 1 under a composite estimator that includes both reference and convenience sample inclusion and participation probabilities.

\subsection{Thresholding using both reference and convenience sample probabilities}\label{sec:twoarm}
Let $\delta_{c}$ and $\delta_{r}$ denote random inclusion indicators (governed by a survey design distribution) for convenience and reference samples, respectively, and let $\pi_{c}(x) = \Pr[\delta_c = 1 \mid X=x]$ and similarly for $\pi_{r}$.  Define our estimator as,
\begin{equation}\label{eq:compest}
\hat{\mu} = \mu + \frac{1}{N}\mathop{\sum}_{i=1}^{N} \frac{z_{i}\delta_{ci}}{\hat{\pi}_{c}(x_{i})} + \frac{z_{i}\delta_{ri}}{\pi_{r}(x_{i})},
\end{equation}

Although the above estimator is defined disjointly on the reference sample using $\pi_r(X)$ and the convenience sample using $\hat{\pi}_{c}(X)$, the resulting optimal variance thresholding rule of Equation~\ref{eq:threshrule} applies to \emph{only} units in the convenience sample.  So, as mentioned in Remark 4, below, we may use estimated $\hat{\pi}_{c}(x_{i})$ \emph{and} $\hat{\pi}_{r}(x_{i})$ for \emph{each} unit $i \in S_{c}$ to apply the thresholding rule of Equation~\ref{eq:threshrule}.  To demonstrate that this trick works, we may generate an estimator identical to Equation~\ref{eq:compest} that includes both convenience and reference sample probabilities defined \emph{solely} for convenience units.   Use $\{\pi_{c}(x_{i})\}_{i\in S_{c}}$ to generate a pseudo population of size $N$ (from units $i \in S_{c}$, allowing for replicates). Next take a random / probability sample from this pseudo population using $\{\pi_{r}(x_{i})\}$ of the same size as the reference sample.   Now form the same estimator as Equation~\ref{eq:compest}, but the universe of units is actually confined to $i \in S_{c}$.

Let
\begin{align}
\phi(Y,\delta_{c},\delta_{r},X,\mu,e_{c},e_{r}) &= \frac{z\delta_{c}}{\pi_{c}(X)} + \frac{z\delta_{r}}{\pi_{r}(X)}\\
\hat{\mu} & = \mu + \frac{1}{N}\mathop{\sum}_{i=1}^{N}\phi\left(y_{i},\delta_{ci},\delta_{ri},x_{i},\mu,\pi_{c}(x_{i}),\pi_{r}(x_{i})\right).
\end{align}

Then, from \citet{db8ba95fe8f345adae86cf364ecfaa79} the variance of our estimator is 
\begin{align}
    \mathbb{E}\left[\phi(Y,\delta,X,\mu,e)^{2}\right] &= \frac{1}{N}\mathbb{E}\left[\frac{\sigma_{c}^{2}(X)}{\pi_{c}(X)} + \frac{\sigma_{r}^{2}(X)}{\pi_{r}(X)}\right]\label{eq:twoarmvar} ,
\end{align}
where $\sigma_{c}^{2} = \mathbb{V}\left(Y\mid \delta_{c} = 1,X = x\right)$ and similarly for $\sigma_{r}^{2}$. The expectation on the LHS of Equation~\ref{eq:onearmvar} is taken with respect to the joint distribution for $X$ and the taking of a sample from the underlying frame on which $\mathbb{X}$ is defined.  The expectation on the RHS is taken with respect to the distribution for $X$.  We have used the assumption of independence between the sampling arms with respect to the design distribution.

We may now use Equation~\ref{eq:twoarmvar} to extend and generalize Corollary 1 of \citet{11514} in the case where $\sigma_{c}^{2} = \sigma_{r}^{2} = \sigma^{2}$.  
\begin{thm}[Two-arm extension of \citet{11514}]\label{thm:twoarm}\ \\
Assume $(\pi_{c}(x) > 0,\pi_{r}(x) > 0), ~\forall x\in \mathbb{X}$. \\Then $\mathbb{A} = \left\{x \in \mathbb{X}: \sqrt{\pi_{r}(X)\pi_{c}(X)/(\pi_{r}(X) + \pi_{c}(X))} > \alpha\right\}$ defines the optimal subset of $\mathbb{X}$ where threshold $\alpha$ is obtained as a solution to,
\begin{equation} \label{eq:threshrule}
    \frac{1}{\alpha^{2}} = 2\mathbb{E}\left[\frac{1}{\pi_{c}(X)} + \frac{1}{\pi_{r}(X)}\bigg|\frac{1}{\pi_{c}(X)} + \frac{1}{\pi_{r}(X)} \leq \frac{1}{\alpha^{2}} \right].
\end{equation}
\end{thm}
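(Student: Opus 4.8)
The plan is to reduce Theorem~\ref{thm:twoarm} to the set-optimization argument underlying Corollary~1 of \citet{11514}, exactly as in the one-arm case (Theorem~\ref{thm:onearm}), with the variance expression of Equation~\ref{eq:twoarmvar} now playing the role that Equation~\ref{eq:onearmvar} played there. The first step is to identify the quantity being minimized. For a measurable acceptance set $\mathbb{A} \subseteq \mathbb{X}$, the thresholded estimator retains only convenience units with $x_{i} \in \mathbb{A}$. By the pseudo-population device described below Equation~\ref{eq:compest} --- build a size-$N$ pseudo frame from $\{\pi_{c}(x_{i})\}_{i \in S_{c}}$ and subsample it using $\{\pi_{r}(x_{i})\}$ --- this estimator is, in its design distribution, the two-arm estimator of Equation~\ref{eq:compest} over a population confined to $\mathbb{A}$; hence it targets the restricted domain mean $\mu_{\mathbb{A}} = \mathbb{E}[Y \mid X \in \mathbb{A}]$, and generalizing Equation~\ref{eq:twoarmvar} in the manner of Section~3.1 of \citet{11514} its variance is
\begin{equation*}
V(\mathbb{A}) = \frac{1}{N\,q(\mathbb{A})^{2}}\,\mathbb{E}\left[\left(\frac{\sigma_{c}^{2}(X)}{\pi_{c}(X)} + \frac{\sigma_{r}^{2}(X)}{\pi_{r}(X)}\right)\mathbf{1}(X \in \mathbb{A})\right], \qquad q(\mathbb{A}) = \Pr(X \in \mathbb{A}).
\end{equation*}
Imposing $\sigma_{c}^{2} = \sigma_{r}^{2} = \sigma^{2}$ constant collapses the integrand to $\sigma^{2}h(X)$ with $h(X) = 1/\pi_{c}(X) + 1/\pi_{r}(X)$ --- a function of $X$ only through $(\pi_{c},\pi_{r})$ --- and the constant $\sigma^{2}$ factors out of $V(\mathbb{A})$, so minimizing $V(\mathbb{A})$ over $\mathbb{A}$ is exactly the problem solved in \citet{11514} with their $\sigma_{1}^{2}(x)/e(x) + \sigma_{0}^{2}(x)/(1-e(x))$ replaced by $h(x)$.

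The second step is to run their optimization. Among all $\mathbb{A}$ of a prescribed probability $q$, the minimizer of $\mathbb{E}[h(X)\mathbf{1}(X \in \mathbb{A})]$ is the sublevel set $\{x : h(x) \le \gamma_{q}\}$ that collects the units of smallest $h$; optimizing over $q$ (equivalently over the cutoff $\gamma$) by differentiating $\gamma \mapsto \mathbb{E}[h(X)\mathbf{1}(h(X) \le \gamma)]/\Pr(h(X) \le \gamma)^{2}$ yields the first-order condition $\gamma^{\ast} = 2\,\mathbb{E}[h(X) \mid h(X) \le \gamma^{\ast}]$, which is the content of Corollary~1 of \citet{11514}. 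Setting $\gamma^{\ast} = 1/\alpha^{2}$ turns this condition into Equation~\ref{eq:threshrule}, and turns the optimal set $\{x : 1/\pi_{c}(x) + 1/\pi_{r}(x) \le 1/\alpha^{2}\}$ into $\{x : \pi_{c}(x)\pi_{r}(x)/(\pi_{c}(x)+\pi_{r}(x)) \ge \alpha^{2}\} = \{x : \sqrt{\pi_{c}(x)\pi_{r}(x)/(\pi_{c}(x)+\pi_{r}(x))} \ge \alpha\}$ on clearing denominators (the boundary $\{h(X) = \gamma^{\ast}\}$ being negligible, or handled by convention), which is the claimed acceptance set $\mathbb{A}$.

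The delicate point is the first step: justifying that the thresholded two-arm estimator truly has variance $V(\mathbb{A})$ with the $q(\mathbb{A})^{-2}$ normalization, i.e., that restricting acceptance to convenience units with $X \in \mathbb{A}$ acts simultaneously on the $\pi_{c}$-arm and the $\pi_{r}$-arm even though Equation~\ref{eq:compest} nominally defines the reference arm over the reference sample --- this is precisely the role of the pseudo-population construction, and it must be set up carefully enough that Section~3.1 of \citet{11514} transfers verbatim. The homoskedasticity hypothesis is likewise load-bearing: without $\sigma_{c}^{2} = \sigma_{r}^{2}$ the integrand is not a function of $(\pi_{c},\pi_{r})$ alone and the optimal set does not reduce to a threshold on $\sqrt{\pi_{c}\pi_{r}/(\pi_{c}+\pi_{r})}$. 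Once the first step is in place, the second is a direct transcription of \citet{11514}, needing only the mild regularity (no atom of the law of $h(X)$ at $\gamma^{\ast}$, or their more careful general argument) for the first-order condition to pin down the global minimum. As in \citet{11514}, I would also note that ``variance optimal'' here refers to the efficiency bound for the restricted estimand $\mu_{\mathbb{A}}$, so thresholding trades bias $\mu_{\mathbb{A}} - \mu$ against reduced variance for the original target $\mu$.
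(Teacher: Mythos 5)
Your proposal is correct and follows essentially the same route as the paper: the paper's proof simply substitutes $\pi_{c}(X)$ for $e(X)$ and $\pi_{r}(X)$ for $1-e(X)$ in Theorem 1/Corollary 1 of \citet{11514} together with the two-arm variance of Equation~\ref{eq:twoarmvar} (under $\sigma_{c}^{2}=\sigma_{r}^{2}=\sigma^{2}$), which is exactly your reduction. The only difference is that you spell out the internal sublevel-set optimization and first-order condition that the paper leaves to the cited reference, and you correctly flag homoskedasticity and the restricted estimand as the load-bearing points.
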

\begin{proof}
    Plugging in $\pi_{c}(x)$ for $e(X)$ and $\pi_{r}(X)$ for $1-e(X)$ into Theorem 1 of \citet{11514} and using the result of Equation~\ref{eq:twoarmvar} for the case of where we utilize both the convenience sample and reference sample participation and inclusion probabilities (for the convenience units) produces the result.
\end{proof}

\begin{remark}
Defining variance optimal subset, $\mathbb{A}$, by thresholding \newline $\sqrt{\pi_{r}(x_{i})\pi_{c}(x_{i})/(\pi_{r}(x_{i}) + \pi_{c}(x_{i}))} > \alpha$ is a harmonic mean that tends to exclude units $i$ where $\pi_{r}(x_{i})$ is a very different value from $\pi_{c}(x_{i})$.  We may even better understand the behavior of this thresholding statistic by noting the result from \citet{beresovsky2024review} that $\Pr[i \in S_{c},i\in S_{r}\mid i \in S] = \pi_{ri}\pi_{ci}/(\pi_{ri}+\pi_{ci})$, where $S = S_{c} \bigotimes S_{r}$ denotes the pooled convenience and reference sample.   This result reveals that convenience units with low probabilities of being in \emph{both} the convenience and reference samples tend to be excluded.  This thresholding behavior matches intuition because units with low probabilities to appear in both samples will tend to have low overlaps in their covariate supports.  We further note that our derivation of this variance minimizing threshold statistic was done without explicit reference to this joint probability, which makes the concordance of the two expressions (for the thresholding statistic, on the one hand, and the joint probability of inclusion in both samples, on the other hand) to be quite fortuitous.  We label this thresholding statistic as ``balanced" because it favors inclusion of records for estimating the domain mean that have relatively high probabilities of participating in \emph{both} samples.
\end{remark}

\begin{remark}
This thresholding method can be used in practice solely directed to units $i \in S_{c}$ because we have both estimated $\left(\hat{\pi}_{c}(x_{i}),\hat{\pi}_{r}(x_{i})\right)$ available.
\end{remark}

\begin{remark}
Theorem~\ref{thm:twoarm} assumes both $(\pi_{r}(x),\pi_{c}(x))$ are \emph{known} for the convenience units when, in fact, they are estimated.   We explore the sensitivity to the performance of the variance minimizing thresholding statistic (for the domain mean) of this theorem to estimation uncertainty for $(\hat{\pi}_{r}(x),\hat{\pi}_{c}(x))$ in the simulation study to follow.
\end{remark}

\subsection{Thresholding statistic motivated by variance structure of model score function}\label{sec:additional}
Our derivation of the thresholding statistic of Section~\ref{sec:twoarm} treats $\pi_{c}(\mathbf{x})$ as known.  By contrast, \citet{beresovsky2024review} suppose a generalized linear model, $\text{logit}(\pi_{ci}(\boldsymbol{\beta}))=\boldsymbol{\beta^T}\mathbf{x}_i$, with a linear form under a logit link function for logistic regression.  They derive the variance of the domain mean, $\hat{\mu}$, that includes an additive term for variance of the score function, $S(\boldsymbol{\beta})$, which has two parts:
\begin{align*}
\Var[S(\boldsymbol{\beta})] &= 
\Var[S_c(\boldsymbol{\beta})]+\Var[S_r(\boldsymbol{\beta})]=:\mathbf{A}+\mathbf{D}\\
\boldsymbol{D} &= 
\Var_{d}\left[ \sum\nolimits_{S_r}{ {\frac{ g_i}{ 1 + g_i }}  \left(1-{{\pi }_{ci}} \right) {{\mathbf{x}}_{i}}} \right],   
\end{align*}
where $g_{i} = \pi_{c}(\mathbf{x}_{i})/\pi_{r}(\mathbf{x}_{i})$ and $\Var_{d}$ denote the design variance.  
Motivated by the dependence of $\boldsymbol{D}$ on $g_{i}$, we propose to use this statistic as another thresholding option.

 We propose the following acceptance set that uses $g$:
\begin{equation*}
  \mathbb{A} = \left\{x \in \mathbb{X}: \pi_{r}(x)/\pi_{c}(x) > \alpha\right\}.
\end{equation*}
\begin{remark}
    The use of $\pi_{r}(x)/\pi_{c}(x)$ as a thresholding statistic may be intuitively motivated by noting that it will tend to threshold or exclude units $i \in S_{c}$ where $\pi_{r}(\mathbf{x}_{i})$ is relatively small for each unit and $\pi_{c}(\mathbf{x}_{i})$ is relatively large, which may occur if the value for $\mathbf{x}_{i}$ for some $i \in S_{c}$ is not well covered by or represented in the reference sample, $S_{r}$.
\end{remark}

\section{Simulation study}\label{sec:simulation}

\subsection{Simulation design}
We conduct a Monte Carlo simulation study that generates a finite population on each iteration to include covariates $\mathbf{x}$ that govern both the convenience and reference sample designs.  The sample designs are size-based as a linear function of $\mathbf{x}$ where we vary the coefficients of the linear function to draw two categories of reference and convenience samples: 1. Where the covariate spaces of resulting reference and convenience samples express a \emph{high} degree of overlap; 2. Where the two samples express a \emph{low} degree of overlap.  We also generate a response variable of interest, $y$, for the finite population.  A domain mean, $\mu$, is constructed for the population and \emph{estimated} by a combined weighted estimator over the reference and convenience samples.  Finally, we compare the $3$ thresholding methods we developed in Section~\ref{sec:optimal} in terms of their bias, error and coverage performances.  We expect that conducting thresholding of sampled convenience units using one or more of our thresholding statistics will reduce estimation error.

We utilize the simulation data generation process of \citet{savitskycombine2023}. We briefly summarize the procedure and refer the reader for a more detailed exposition.  We generate $M = 30$ distinct populations, each of size $N = 4000$.  Design covariates, $X$, of dimension $K = 5$ are generated (all binary, with one continuous).   Outcome variable, $y_{i}$, is generated as $\log(y_i) \sim \mathcal{N}(\mathbf{x}_i \beta, 2)$ for $i = 1,\ldots, N$.

A randomized reference sample of size $n_{r} = 400$ is taken from the finite population under a proportion-to-size (PPS) design with size variable, $s_{r_i}  = \log(\exp(\mathbf{x}_i \times \beta) + 1)$.

For the convenience sample, we set $n_{c} \approx 800$, which is a relatively larger sampling fraction that we choose to explore the full range of $\pi_c \in [0,1]$ that we would expect to see for business establishment data in the U.S. Bureau of Labor Statistics.  We use a size-based Poisson sample with $\pi_{c_i} = \mbox{logit}^{-1}(\mathbf{x}_i \times \beta_{c} + \mbox{offset})$.  We control `high' and `low' overlap by varying $\beta_{c}$ compared to the reference sample.

Figure~\ref{fig:overlap} presents a violin (rotated and reflected density) plot for the percentage overlap of \emph{units} in both the convenience and reference samples over the Monte Carlo iterations.  The left-hand plot represents the high overlap samples and the right-hand plot represents the low-overlap samples.  We see that the number of shared units in both samples is notably higher for the high overlap samples than for the low overlap samples.  We expect fewer units to be thresholded for a high overlap sample since their covariate supports express relatively more overlap suct that units in the convenience sample are more similar to those in the reference sample.   Since our modeling obtains information about the population from the reference sample (and reference sample inclusion probabilities) we are able to better estimate participation probabilities for convenience units that are similar in covariate values to the reference units.

\begin{figure}
\centering
\includegraphics[width = 0.95\textwidth]{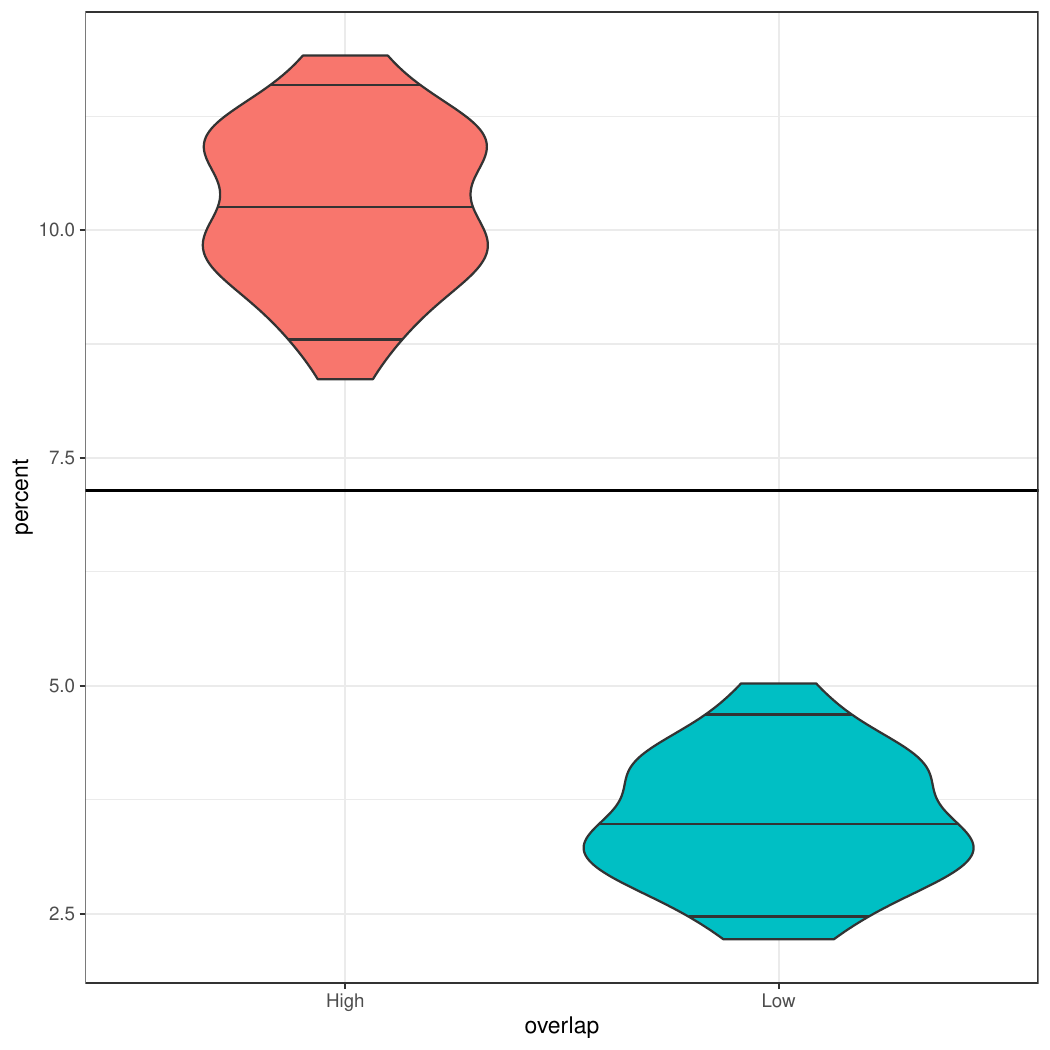}
\caption{Distribution over $M = 30$ Monte Carlo iterations of the percentage of units overlapping between realized reference and convenience samples (taken on each Monte Carlo iteration).
    }
\label{fig:overlap}
\end{figure}
\afterpage{\FloatBarrier}

\subsection{Thresholding of convenience units}

In this paper we employ the Bayesian model formulation of \citet{savitskycombine2023} that estimates both $(\pi_{r}(\mathbf{x}_{i}),\pi_{c}(\mathbf{x}_{i})),~i\in S_{c}$.  In the sequel we use $(\pi_{ri} = \pi(\mathbf{x}_{i}), \pi_{ci} = \pi_{c}(\mathbf{x}_{i}))$ for ease-of-reading and to emphasize the dependence on $i \in S_{c}$.

Within each Monte Carlo iteration, $m \in 1,\ldots,M$, we conduct thresholding of convenience units and computation of the domain mean for each posterior/MCMC sample in the following procedure:
\begin{enumerate}
    \item For \emph{each} posterior/MCMC draw $s \in 1,\ldots S$, compute the thresholding statistic (e.g., balanced thresholding statistic) for each unit $i \in S_{c}$ as a function of $(\hat{\pi}_{rsi},\hat{\pi}_{csi})$.  Denote the focus thresholding statistic as, $T(\hat{\pi}_{rsi},\hat{\pi}_{csi})$, that allows us to provide a general exposition of how we conduct thresholding of convenience units; for example, we may set $T(\hat{\pi}_{rsi},\hat{\pi}_{csi}) = \sqrt{\hat{\pi}_{rsi}\hat{\pi}_{csi}/(\hat{\pi}_{rsi} + \hat{\pi}_{csi})}$ for $i \in S_{c}$.
    \item For MCMC iteration $s$: evaluate the distribution of the thresholding statistic $T(\hat{\pi}_{rsi},\hat{\pi}_{csi})$ over the convenience units, $i \in S_{c}$, and compute threshold quantile, $\alpha_{s}$ associated with target percentile, $\gamma$, below which convenience units are excluded / thresholded.
    \item \emph{Retain/accept} those \emph{convenience} units where $\mathbb{A}_{s} = \{i\in S_{c}: T(\hat{\pi}_{rsi},\hat{\pi}_{csi}) > \alpha_{s}  \}$.
    \item Use the retained units in draw $s$ to construct the domain mean, $\mu_{s} = (\sum_{i\in \mathbb{A}_{s}}y_{i}/\hat{\pi}_{csi} + \sum_{i\in S_{r}}y_{i}/\hat{\pi}_{rsi})/ (\sum_{i\in \mathbb{A}_{s}}1/\hat{\pi}_{csi} + \sum_{i\in S_{r}}1/\hat{\pi}_{rsi})$.  
    \item  One now has the induced posterior distribution over the $S$ MCMC samples for $\mu$ from which one may estimate the mean (e.g., $\mu = 1/S\sum_{s=1}^{S}\mu_{s}$).
\end{enumerate}
\begin{remark}
    The above procedure is a form of ``soft'' thresholding because a unit $i \in S_{c}$ may be excluded on posterior sampling draw $s$ in forming domain mean estimator $\mu_{s}$, but then \emph{included} in posterior draw $s^{'}$ to construct $\mu_{s^{'}}$.  So each $\mu_{s}$ may be constructed from a differing set of convenience units. This occurs because $(\hat{\pi}_{rsi},\hat{\pi}_{csi})$ are parameters estimated from our model, so the distribution of $T(\hat{\pi}_{rsi},\hat{\pi}_{csi})$ over convenience units $i \in S_{c}$ will vary from over the posterior draws, $s \in 1,\ldots,S$.  
    
    We formulate a variation to this procedure that produces a ``hard'' threshold to compare the performance to our main soft thresholding procedure.  For the hard thresholding alternative, we construct the acceptance sets $\mathbb{A}_{s},~s = 1,\ldots,S$ as described in the first $3$ steps of the above procedure.  We then count the percentage of over the $S$ posterior draws that unit $i$ is in each acceptance set $\mathbb{A}_{s}$.  If the percentage is less than $50\%$ we \emph{exclude} or threshold unit $i$.  In other words, we form a single acceptance set over the $S$ MCMC draws with $\mathbb{A} = \{i\in S_{c}:i\in \mathbb{A}_{s} \text{ for a total of } S^{i} = \sum_{s}^{S}(1:i\in \mathbb{A}_{s}) > 0.5S\}$.   So, our first addtional steps formulates $\mathbb{A}$, the set of non-thresholded convenience units.  We then use this \emph{same} set of units to compute $\mu_{s}$ for each MCMC draw.  So, either unit $i$ is included to construct all the $\mu_{s}$ or it is excluded.  We use the label ``two-step" for this hard thresholding alternative since we first threshold the units over all MCMC draws and then compute the domain mean estimator.
\end{remark}

\begin{remark}
    Although our thresholding procedure is constructed under the Bayesian model formulation of \citet{savitskycombine2023} for developing a thresholded posterior distribution for domain mean, $\mu$, steps $1-4$ of our thresholding procedure may be applied under the frequentist generalized linear formulation of \citet{beresovsky2024review} to obtain a thresholded estimator of $\mu$ with no loss of generality or applicability.  Instead of thresholding each MCMC draw, $s$, one would threshold the statistic formed from the maximum likelihood estimators of the convenience sample participation probabilities under frequentist model estimation.
\end{remark}

\subsection{Results}

Figure~\ref{fig:two_arm_results} presents plot panels for bias, root mean squared error (RMSE), median absolute deviation (MAD) and coverage results over the $M$ Monte Carlo iterations.  The left side of each horizontal bar in the plot panels represents a result for ``L'' or the low overlap sample, while the right side of each horizontal bar represents a result for ``H'' or the high overlap sample. 
The top most row of bars in blue presents results using the unknown \emph{true} values for both the reference sample inclusion probabilities for the reference sample units and the convenience sample inclusion probabilities the convenience units as if they were known.
The next row of bars down from the top in red presents the result from the model of \citet{savitskycombine2023} that smooths or co-models the inclusion probabilities for the reference sample units.   No thresholding is conducted for the results in these first two rows. 
The next two rows of bars present results for our variance optimal balanced threshold statistic:  the orange bar uses our main soft thresholding procedure, while the yellow bar uses the alternative hard thresholding procedure that we label as ``two-step".
The next row of light green bars presents results for thresholding $\pi_{ri}$ while the last row of green bars presents results for the ratio ($\pi_{ri}/\pi_{ci}$) thresholding statistic.  We remind the reader that the statistics and thresholding are performed over $i \in S_{c}$ (the convenience sample) and that our Bayesian model estimates both ($\pi_{ri},\pi_{ci}$) for each unit in the convenience sample.   
The vertical black dashed line in each plot panel represents the result using \emph{only} the reference sample (and excluding the convenience sample).  We use the $\gamma = 5\%$ of the distribution over the convenience for each threshold statistic to compute the thresholding quantile, $\alpha$.

One notes that the estimation errors (RMSE, MAD) are little different both with and without thresholding and among the thresholding statistics for the high (H) overlap samples, which is expected because there is less need for thresholding due to the high degree of overlap in covariate spaces between the reference and convenience samples such that most convenience sample inclusion probabilities are well-estimated.   By contrast, we observe that the estimation errors for the balanced statistic perform best among the different thresholding statistics and even better than the case where use the true convenience sample participation probabilities (blue bars) as is they were known.  The slight increase in bias relative to the blue bar is more than offset by a decrease in variance, producing lower estimation error.  There is little difference between the soft and hard thresholding alternatives under the balanced statistic, though the soft thresholding produces a slightly higher amount of bias but also a slightly lower amount of estimation error as compared to hard thresholding.  Perhaps we are not surprised that the balanced threshold statistic performed best because it was derived as a minimum variance estimator for the domain mean, though it is surprising that this thresholding option performed better for low overlap (L) samples than did the domain mean estimator constructed from the true (rather than estimated) convenience sample inclusion probabilities (as if they were known).  

Lastly, while the balanced threshold statistic produces only a slight improvement in error for high overlap (H) samples, the notion of whether a convenience sample is high or low overlap is relative such that the practitioner may not know whether their realized reference and convenience samples represent H or L.  Nevertheless, since thresholding with the balanced statistic never produces worse errors than not thresholding and sometimes much better there is little risk to use thresholding.

\begin{figure}
\centering
\includegraphics[width = 1.0\textwidth]{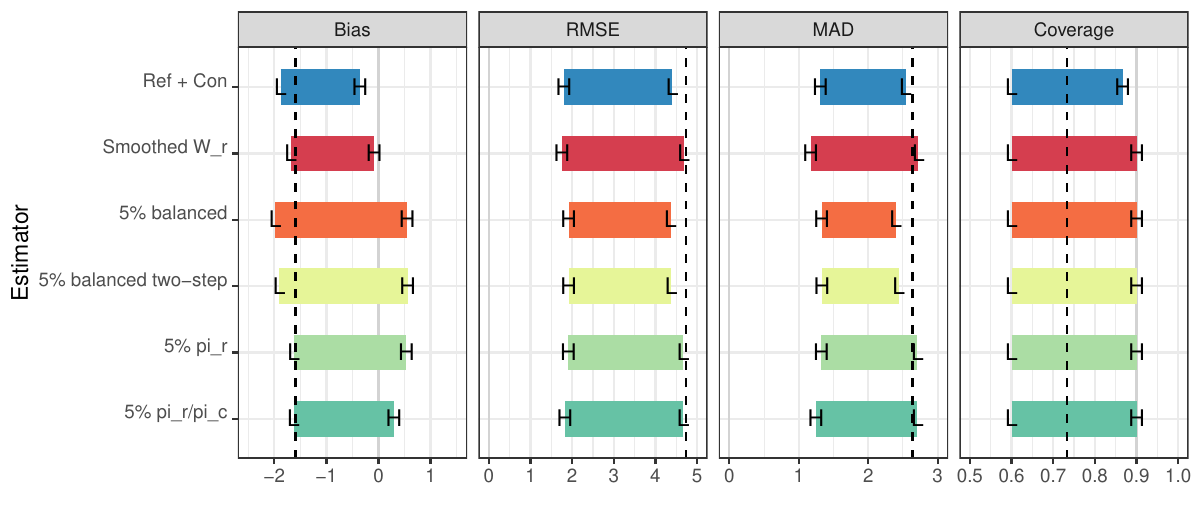}
\caption{Performance of the weighted mean estimator between high (H) and low (L) overlapping samples using variations of the two-arm method across Monte Carlo Simulations for (top to bottom): True weights for both samples (Blue), Smoothed weights for reference sample (Red), minimum variance or balanced $\sqrt{\pi_{r}(x)\pi_{c}(x)/(\pi_{r}(x) + \pi_{c}(x))}$ (Orange), balanced based on posterior mean (Yellow), $\pi_r$ only (Light Green), $\pi_r/\pi_c$ (Dark Green). Left to right: Bias, root mean square error, mean absolute deviation, coverage of 90\% intervals. Vertical reference line corresponds to using the reference sample only.
    }
\label{fig:two_arm_results}
\end{figure}
\afterpage{\FloatBarrier}

We chose a reasonably small ($5\%$) percentile for thresholding, so we next experiment with $10\%$ and $1\%$ under our best performing balanced thresholding statistic (under soft thresholding).  Figure~\ref{fig:two_arm_variations} presents the results.  While the estimation errors are similar for the $3$ different percentiles for low overlap (L) samples, we nevertheless note that the error performance is notably better for $1\%$ balanced thresholding under high overlap (H) samples than the other two higher thresholding percentiles, and even performs slightly better than the blue bar that uses true convenience sample participation probabilities.  Thresholding fewer units for high overlap samples intuitively makes sense since convenience units are relatively more similar to reference units.  The low overlap sample MAD is, however, worst for the $1\%$ threshold and best for the $10\%$ threshold, which also accords with intuition since the convenience units in low overlap samples are less similar (in their covariate values) to reference sample units.   Yet, the worsening of estimation error in the low overlap is a much smaller magnitude than the improvement in estimation error for high overlap.   Our results suggest that the practitioner may generally favor a relatively lower value for the thresholding percentile.

While thresholding does notably reduce estimation errors (RMSE/MAD) on low overlap samples, as expected, uncertainty quantification is little improved (and continues to express undercoverage) even after thresholding due to the limited estimation improvement offered for a low overlap convenience sample.  The fidelity of uncertainty quantification is driven by the underlying degree of overlap in the covariate supports of the reference and convenience sampling arms and is not much affected by thresholding relatively few convenience units. As a result of the low quality of uncertainty quantification under the low overlap samples, the coverage performances for all methods express little differentiation.  By contrast, for high overlap the coverage results are more robust and nominal coverage is achieved when thresholding relatively fewer units, as expected.  Thresholding is most important for low overlap samples to prevent non-representative outliers from inducing large errors (due to biased estimation of their convenience sample inclusion probabilities).  Our results show that thresholding for low overlap samples provides a notable improvement in error control over repeated sampling.

\begin{figure}
\centering
\includegraphics[width = 1.0\textwidth, page = 2]{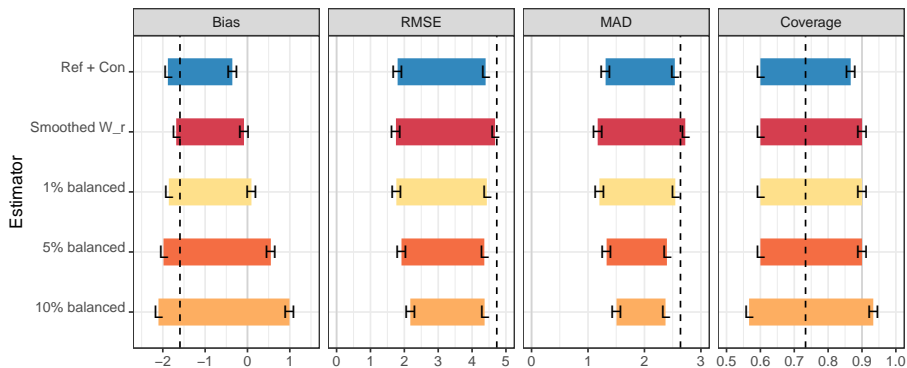}
\caption{Comparison of the variance for the balanced threshold, $\sqrt{\pi_{r}(x)\pi_{c}(x)/(\pi_{r}(x) + \pi_{c}(x))}$ between high (H) and low (L) overlapping samples for (top to bottom): True weights for both samples (Blue), Smoothed weights for reference sample (Red), 1\% (Yellow) vs. 5\% (Orange) and 10\% (Light Orange).  Left to right: Bias, root mean square error, mean absolute deviation, coverage of 90\% intervals. Vertical reference line corresponds to using the reference sample only.
    }
\label{fig:two_arm_variations}
\end{figure}
\afterpage{\FloatBarrier}

We recall that the balanced threshold statistic was derived to produce a minimum variance domain mean estimator.  Yet, the result in Section~\ref{sec:optimal} assumes that the reference sample inclusion and convenience sample participation probabilities for convenience sample units, $(\pi_{ri},\pi_{ci}),~i\in S_{c}$, are \emph{known} when, in fact, they are estimated.   We seek to assess the sensitivity of the thresholding statistics to uncertainty in estimation of these inclusion and participation probabilities for convenience units.  

Each curve in a each plot panel of Figure~\ref{fig:uncertainty} presents a sequence of $90\%$ credibility intervals of percentiles for the fit statistic estimated on each MCMC iteration.  More specifically, if we fix an MCMC iteration, we next compute the estimated thresholding statistic from the probabilities for each unit and compute its percentile of the distribution of the statistic over the convenience sample units.   We repeat this process for each MCMC draw, which gives us a range of percentiles of the thresholding statistic for each convenience sample unit.  Each horizontal line in the curve represents the $90\%$ credibility interval of the percentiles for a convenience sample unit.  These lines are ordered along the horizontal axis by the posterior mean of estimated thresholding statistic for each unit.  The longer the horizontal lines, the greater the estimation uncertainty for the thresholding statistic.   The blue-colored horizontal lines represent those units who have switched from being on one side of threshold to the other (meaning, they were sometimes included and sometimes excluded) more than $10\%$ of the MCMC samples.  The horizontal dashed lines in each panel represent $1\%, 5\%, 10\%$ thresholds (from bottom-to-top).

The left-hand curve in each plot panel represents estimations under low overlap samples and the right-hand represents high overlap samples.  The left plot panel represents the the balanced thresholding statistic, while the right panel represents the ratio thresholding statistic. 

Focusing on the left-hand panel for the balanced thresholding statistic, we see that the relatively wider horizontal lines for the low overlap sample express more estimation uncertainty than do those for the high overlap sample.  That accords with our expectation because the reference sample provides less information about convenience units whose covariate values are different from those of the reference sample.  Yet, we see relatively few units (colored in blue) that switch between being excluded/thresholded and included for estimating the domain mean.   So, the uncertainty does not impact the thresholding set and that explains why the balanced thresholding statistic turned out to be variance optimal as compared to the other thresholding statistics despite the uncertainties in estimating inclusion and participation probabilities.  By contrast, we observe a relatively higher number of units that switch between inclusion and exclusion under the ratio thresholding statistic in the right-hand plot panel.  So, the performance of this thresholding statistic is less robust under uncertainty about the probabilties than is the balanced thresholding statistic.

\begin{figure}%
    \centering
    \subfloat[\centering Balanced Thresholding]{{\includegraphics[width=5cm]{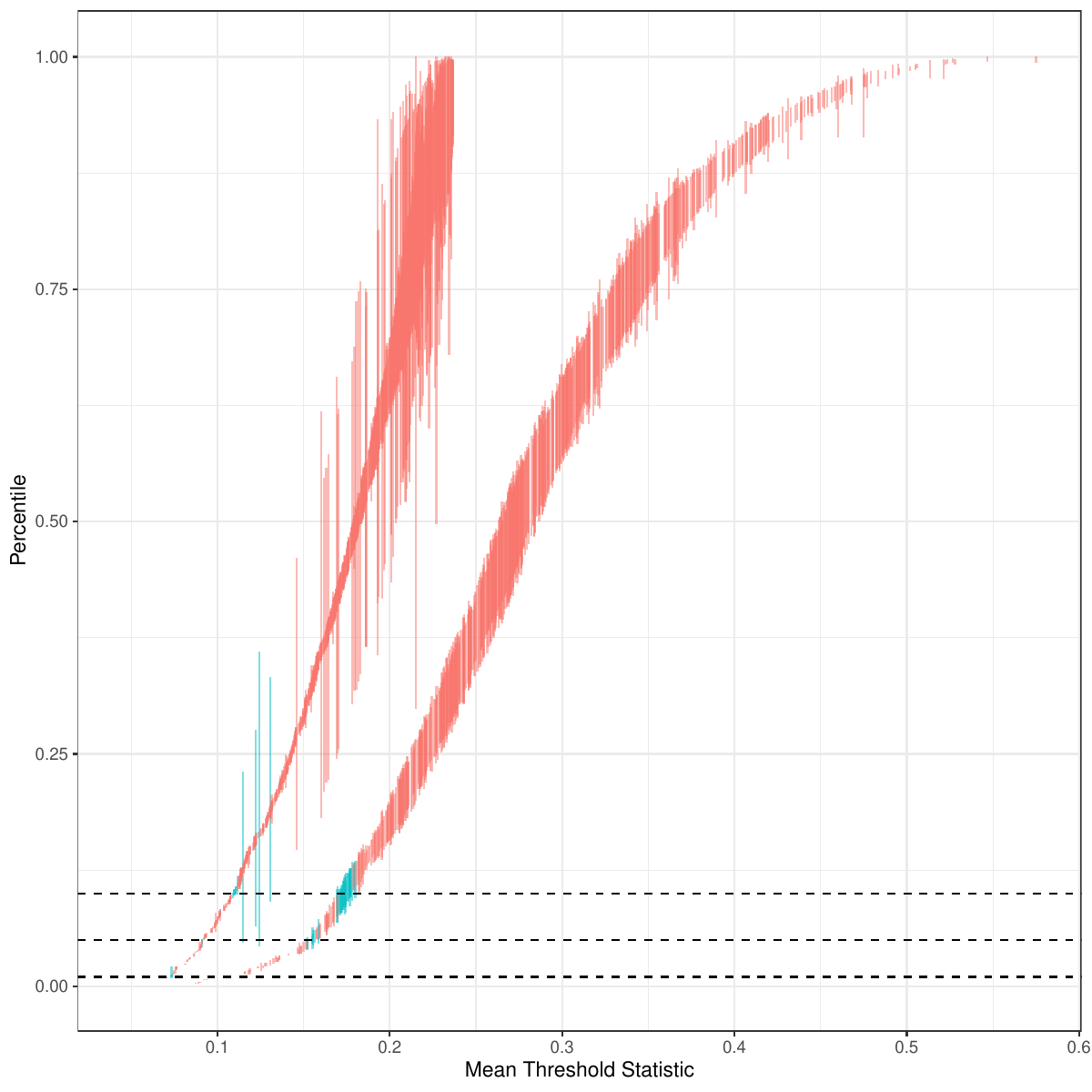} }}%
    \qquad
    \subfloat[\centering Ratio Thresholding]{{\includegraphics[width=5cm]{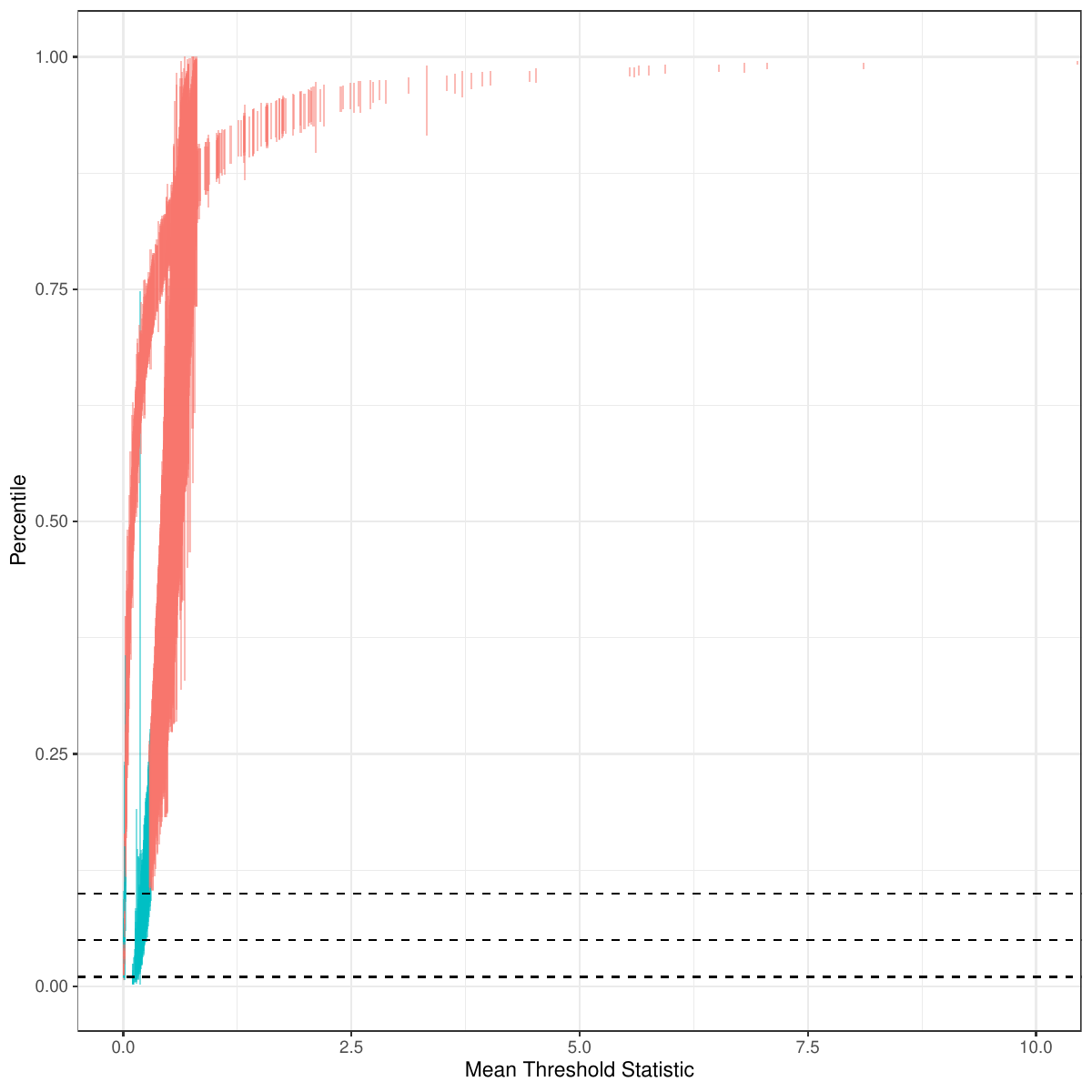} }}%
    \caption{Vertical lines are percentiles of Threshold statistic distribution over 700 MCMC draws of $\pi_{ci},\pi_{ri}$ for convenience units. Left is L and Right is H.  Blue denotes unit that jumped threshold $> 10\%$ of draws}%
    \label{fig:uncertainty}%
\end{figure}
\afterpage{\FloatBarrier}

\section{Discussion}\label{sec:discussion}
The quasi-randomization method of \citet{savitskycombine2023} that treats the non-randomized convenience sample as if it arose from a latent survey design process with an unknown sampling distribution provides a start-of-art method for producing survey-weighted domain estimates.   Yet, the estimation quality of inclusion and participation probabilities for convenience units depends on the degree of overlap in the design covariate spaces between the randomized reference and convenience samples.   It is typically the case that the estimated convenience sample participation probabilities for some convenience units whose design covariate values are very different from the reference sample are not well-estimated.  Incorporating these units can partially defeat the purpose of leveraging the convenience sample by actually increasing bias and variance as compared to excluding them.

We devised a soft thresholding procedure for excluding convenience sample units that are very different from reference sample units and achieved a notable reduction in estimation error for low overlap (in their design covariate spaces) samples.  We began by developing a new formulation for a balanced threshold statistic that minimized the resulting variance of the domain estimator.  Our balanced thresholding statistic proposes to exclude some convenience sample units and is constructed from inclusion and participation probabilities for convenience units that effectively serve as one-dimensional summaries of the design covariates.  It was particularly interesting to discover that the balanced threshold statistic derived from a theoretical exposition turns out to be a function of the joint probability that a unit is in \emph{both} the reference and convenience samples.  This formulation makes intuitive sense because our procedure proposes to exclude those convenience units that express low probabilities of being in both samples.

We motivated an additional thresholding statistic that we labeled ``ratio" as the ratio of reference and convenience sample inclusion probabilities based on the variance formulation of the domain mean estimator derived in \citet{beresovsky2024review}.

We designed a soft thresholding procedure that constructed an acceptance set for convenience units to be included in domain mean estimator on each MCMC iteration such that a unit might be included in some iterations but not others.   

Our result revealed that the balanced threshold statistic produced the greatest reduction in the variance of the domain estimator, particularly for relatively lower overlap samples.  We also showed that this reduction is relatively insensitive to the percentile cutoff for the estimated distribution of the balanced threshold statistic over the convenience sample units.   Finally, we showed that this variance reduction result is robust against estimation uncertainty because the units that are thresholded are minimally impacted under our soft thresholding procedure.

\appendix

\section{Direct derivation of variance minimizing threshold for one-arm sample }\label{a:proof}

The Hajek mean estimator from the convenience sample $S_c$ is:
\begin{align*}
\hat{\bar{y}}=\frac{\sum\nolimits_{{{S}_{c}}}{\frac{y\left( x \right)}{\hat{e}\left( x \right)}}}{\sum\nolimits_{{{S}_{c}}}{\frac{1}{\hat{e}\left( x \right)}}}
\end{align*}
where $\hat{e}\left( x \right)$ is estimated propensity score. \\
The associated model-based variance of this estimator is:
\begin{align*}
\operatorname{var}\left( {\hat{\bar{y}}} \right)=\frac{\sum\nolimits_{{{S}_{c}}}{\frac{\sigma _{y}^{2}\left( x \right)}{{{{\hat{e}}}^{2}}\left( x \right)}}}{{{\left[ \sum\nolimits_{{{S}_{c}}}{\frac{1}{\hat{e}\left( x \right)}} \right]}^{2}}}
\end{align*}

Assume that all variance  $\sigma _{y}^{2}\left( x \right)=\sigma _{y}^{2}$ are equal. Order convenience sample units by response propensity $\widehat{e}\left( x \right)$. Units can be listed by $\widehat{e}\left( x \right)$ with density $w(\widehat{e}\left( x \right))=  \widehat{e}\left( x \right) $. Variance estimated from full convenience sample $S_c$ without cut-off may be expressed as integral over the distribution of response propensity $\widehat{e}\left( x \right)$

\begin{align*}
\operatorname{var}\left( {\hat{\bar{y}}} \right)=\frac{\int_{0}^{1}{\frac{\sigma _{y}^{2}\left( x \right)}{{{{\hat{e}}}^{2}}\left( x \right)}w\left( \hat{e}\left( x \right) \right)d\left( \hat{e}\left( x \right) \right)}}{{{\left[ \int_{0}^{1}{\frac{1}{\hat{e}\left( x \right)}w\left( \hat{e}\left( x \right) \right)d\left( \hat{e}\left( x \right) \right)} \right]}^{2}}}=
\frac{\sigma _{y}^{2}\int_{0}^{1}{\frac{1}{{{{\hat{e}}}}\left( x \right)}d\left( \hat{e}\left( x \right) \right)}}{{{\left[ \int_{0}^{1}{d\left( \hat{e}\left( x \right) \right)} \right]}^{2}}}
\end{align*}

If sample units are trimmed by response propensity at level $\varepsilon$, then variance depending on $\varepsilon$ is
\begin{align*}
\operatorname{var}\left( \hat{\bar{y}},\varepsilon  \right)=
\frac{\sigma _{y}^{2} \int_{\varepsilon }^{1}{\frac{1}{{{{\hat{e}}}}\left( x \right)}d\left( \hat{e}\left( x \right) \right)}}{{{\left[ \int_{\varepsilon }^{1}{d\left( \hat{e}\left( x \right) \right)} \right]}^{2}}}=
\frac{\sigma _{y}^{2} F\left( \varepsilon  \right)}{{{G}^{2}}\left( \varepsilon  \right)},
\end{align*}
where $F(\hat{e}(x))$ is a primitive of $f(\hat{e}(x)) = 1/\hat{e}(x)$ and $G(\hat{e}(x))$ is a primitive of $1$.

Minimize the trimmed variance by $\varepsilon$
\begin{align*}
\frac{d\operatorname{var}\left( \hat{\bar{y}},\varepsilon  \right)}{d\varepsilon }=
\frac{\sigma _{y}^{2} F'\left( \varepsilon  \right){{G}^{2}}\left( \varepsilon  \right)-2G'\left( \varepsilon  \right)G\left( \varepsilon  \right)\sigma _{y}^{2} F\left( \varepsilon  \right)}{{{G}^{4}}\left( \varepsilon  \right)}=0
\end{align*}

Here we have:
\begin{align*}
 F'\left( \varepsilon  \right) &= \frac{d}{d\varepsilon}\left(F(1) - F(\varepsilon)\right) = 0 - \frac{1}{\varepsilon}\times 1\\
 G'\left(\varepsilon\right) &= \frac{d}{d\varepsilon}\left(G(1) - G(\varepsilon)\right) = G'(1) - G'(\varepsilon) = -1.
\end{align*}

Optimal propensity cut-off point $\varepsilon$ can be estimated from the numerator null condition
\begin{align*}
\frac{1}{\varepsilon_c} G \left( \varepsilon_c \right)-2F\left( {{\varepsilon }_{c}} \right)=0 \\
\frac{1}{\varepsilon_c} = \frac{2 F\left( {{\varepsilon }_{c}} \right)}{G \left( \varepsilon_c \right)} =
\frac{2\sum\nolimits_{{{S}_{c}}}{\frac{1}{{{{\widehat{e}}}}\left( x \right)}}\left| \widehat{e}\left( x \right)>{{\varepsilon }_{c}} \right.} {\sum\nolimits_{{{S}_{c}}}{1\left| \hat{e}\left( x \right)>{{\varepsilon }_{c}} \right.}}
\end{align*}

Results of simulations:
\begin{itemize}
    \item Sample size $n=1,400$
    \item Propensity score $\widehat{e} \sim Beta(1,2)$
\end{itemize}

\begin{figure}
\centering
  \includegraphics[width=1.0\linewidth]{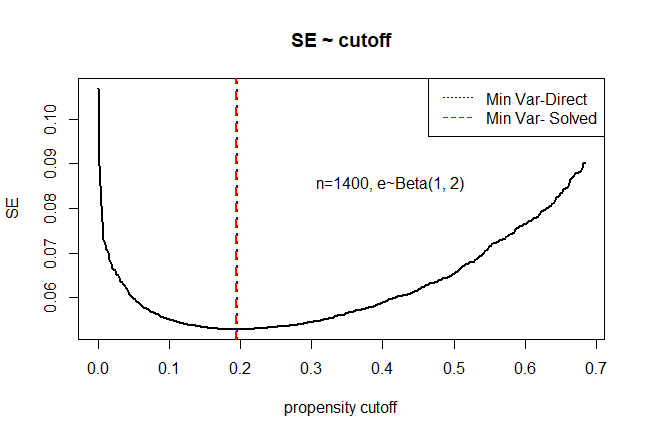}
\end{figure}


\bibliography{ref}

\end{document}